\theoremstyle{plain}%
\newtheorem{theorem}{Theorem}[section]
\newtheorem{lemma}[theorem]{Lemma}
\newtheorem{conjecture}[theorem]{Conjecture}
\theoremstyle{plain}%
\newtheorem*{remark:unnumbered}[theorem]{Remark}%
\newcommand{\myqedsymbol}{\rule{2mm}{2mm}}
\theoremstyle{nonumberplain}%
\newtheorem{proof}{Proof:}%
\newcommand{\remove}[1]{}%
\newcommand\identity {\openone}
\def\CNK{\mathcal{C}_{N,K}}
\def\Qd{\mathcal{Q}_{N,1}^d}
\def\ket#1{\left\vert #1 \right\rangle}
\def\bra#1{\left\langle #1 \right\vert}
\def\braket#1#2{\left\langle #1 \vert #2\right\rangle}
\newcommand{\hamwgt}[1]{\operatorname{wt} (#1)}
\def\vecsp#1#2{\Vec{#1}\cdot\Vec{#2}}
\newcommand{\lessgtreq}{$\sqcup$\kern-0.58em{$\star$}}
\newcommand{\optz}{\Vec{z}^{\hspace{1.5pt}*}}
\newcommand{\polyz}{\hat{\mathcal{J}}}
\DeclareMathOperator{\Tr}{Tr}
\def\changes#1{ #1}
\begin{document}

\preprint{APS/123-QED}

\title{Acquisition of delocalized information via classical and quantum carriers}
\author{Julian Maisriml}
\author{Sebastian Horvat}
\altaffiliation[]{University of Vienna, Department of Philosophy, Universitätsstraße 7, 1010 Vienna, Austria}
\author{Borivoje Dakić}
\altaffiliation[]{Institute for Quantum Optics and Quantum Information (IQOQI), Austrian Academy of Sciences, Boltzmanngasse 3, A-1090 Vienna, Austria}
\affiliation{University of Vienna, Faculty of Physics, Vienna Center for Quantum Science and Technology, Boltzmanngasse 5, 1090 Vienna, Austria}




\date{\today}

\begin{abstract}
We investigate the information-theoretic power of spatial superposition by analyzing tasks in which information is locally encoded at multiple distant sites and must be acquired by a single information carrier, such as a particle. Within an operational framework, we systematically compare the statistical correlations that can be generated in such tasks using classical particles, quantum particles in spatial superposition, and \changes{more general} “second-order interference” resources. We bound classical strategies via convex polytopes and present a study of their symmetry, demonstrating that the vertices are inherently connected to $K$-juntas as defined in the classical theory of Boolean functions, while their facet inequalities are in one-to-one correspondence with oracle games. We then analyze the violation of the ``fingerprinting inequality'' achievable by the use of one quantum particle, and we study the dependence of this violation on the dimension $d$ of the particle's internal degree of freedom. \changes{In particular, we show that the case of $d=2$ can achieve a higher violation of the inequality than the previously investigated case of $d=1$. We also provide analytic and numerical evidence that this violation cannot be further increased for larger $d>2$.
Finally}, we find that both quantum and any other (generalized) second-order interference models exhibit the same asymptotic scaling in violating the fingerprinting inequality. Our results thereby further articulate quantum interference and spatial superposition as a resource for information processing.
\end{abstract}

\maketitle

\section{Introduction}
Quantum information theory, broadly construed, examines the capabilities and limitations of utilizing quantum-mechanical systems for encoding, transmitting, and decoding information. One feature that distinguishes quantum from classical mechanics is quantum entanglement, which accordingly lies at the basis of many protocols that outperform the best-known classical counterparts in computation, communication, and cryptography. Another distinguishing feature of quantum theory—which has received relatively less attention in the quantum information community—is that quantum particles can be prepared in a superposition of multiple locations. This alludes to a potential enhancement in the processing of information that is dispersed throughout space. Following this cue, recent theoretical and experimental studies have shown that a single quantum particle can outperform a plurality of classical particles in tasks requiring the transmission of “delocalized” information. 

In their work, Del Santo and Daki\'c \cite{delsanto_twoway_2018} showed that quantum superposition allows two distant parties to engage in effective two-way communication using only a single particle—something classically impossible under relativistic constraints. This result framed interference as a resource for communication. Building on this, Horvat and Dakić \cite{horvat_quantum_2019,horvat_quantum_2021,horvat_interference_2021} reformulated multislit experiments as information-theoretic parity games, connecting the order of interference to computational advantage. 
Zhang and coworkers \cite{zhang_building_2022,chen_information_2024} introduced a framework for multiple access channels (MACs) realized with single particles, distinguishing quantum from classical encoding strategies. Quantum MACs were shown to violate generalized fingerprinting inequalities, marking the boundary of classical resources in interferometric-like setups. These insights and developments also connect to broader foundational studies of quantum communication through devices with an indefinite input-output direction \cite{delsanto_coherence_2020,liu_quantum_2023} and the so-called second-quantized Shannon theory \cite{chiribella_quantum_2019}. 

In parallel, the use of spatial superposition in quantum cryptography has inspired new protocol designs. Massa \textit{et al.} proposed a communication scheme in which a single quantum particle is exchanged only once between two parties, enabling secure and deterministic key establishment \cite{massa_experimental_2019}. Building on these initial ideas, semi-device-independent quantum key distribution protocols have been designed \cite{massa_experimental_2022,silva_coherencewitnessing_2023}. Another practical aspect enabled by quantum spatial superposition is the possibility of detecting quantum coherence in interferometric experiments without path reinterference \cite{delsanto_coherence_2020,kun_direct_2025} or its detection with a single setting in triangular scenarios \cite{bibak_quantum_2024}.

In this paper, we present a systematic analysis of spatial superposition as an information-theoretic resource, framed through the lens of information acquisition tasks involving data that is dispersed and delocalized across space. We introduce an operational framework that enables a clear distinction between classical, quantum, and higher-order interference theories. In the spirit of classical polytope constructions in the context of Bell’s theorem and nonlocality, we undertake a structured study in which classical strategies form a convex polytope; its facets correspond to inequalities [such as the fingerprinting inequality \eqref{fingerprint_ineq}] and oracle games that serve to demarcate classical from quantum resources. The framework also serves to compare the quantum case to generalized “second-order interference” models \cite{sorkin_quantum_1994} (see also Ref. \cite{lee_higherorder_2017}). Our results support the view that interference constitutes a fundamental resource for information processing in physical theories. 

The paper is structured as follows. In Sec. II, we introduce the setup of the tasks described above and study classical resources (information carriers or particles) and probabilistic correlations they generate as embedded in the convex polytope structure. We present a study on the symmetry of these polytopes, and we show that the vertices are directly linked to $K$-juntas \cite{odonnell_analysis_2014}, as defined in classical Boolean function theory, while the facet inequalities are in direct correspondence with oracle games. In Sec. III, we proceed with a thorough analysis of quantum-mechanical protocols, provide an analysis of the fingerprinting inequality \eqref{fingerprint_ineq}, and analyze the dependence of its violation on the dimensionality $d$ of the internal degree of freedom of the used particle. \changes{We thereby show that the case of $d=2$ can achieve a higher violation than the previously investigated case of $d=1$, and we provide analytic and numerical evidence that this violation cannot be further increased for larger $d>2$.} Finally, we compare the quantum violations for these latter setups with general ``second-order interference'' protocols, showing that they achieve asymptotically the same violation. Section IV concludes and indicates future directions that we deem worthy of pursuit.

\section{Operational setting and classical behaviors}
As stated in the introduction, we will be interested in the transmission of information dispersed throughout space. In particular, our setup will  consist of (1) $N$ bits $\mathbf{x}\equiv \left(x_1,...,x_N\right)$ that are encoded at $N$ distinct locations, (2) a source that emits particles whose purpose is to collect the latter information by interacting with the objects wherein the information is encoded, and (3) a device that produces a bit $a$ upon interaction with the particles, and whose aim is the decoding of the sought information. The experiments that we will analyze will be subdividable into three temporal stages:\\
$t_1$: the input bits $\mathbf{x}$ are fixed at their pertaining locations, and the source emits a number of particles, whose state does not depend on $\mathbf{x}$;\\
$t_2$: the particles interact with the objects (devices) wherein the input bits are encoded, in such a way as not to increase the number of particles;\\
$t_3$: the particles interact with the decoding device and an output bit $a$ is produced.

Figures \ref{fig:sub1} and \ref{fig:sub2} illustrate examples of these experiments for the special cases of $N=4$ bits encoded in spatially separated devices, where the particle source emits either three classical particles or one quantum particle.

\subsection{Probabilistic behaviors and convex polytopes}
As our interest lies in the statistical correlation between the ``input'' bits $\mathbf{x}$ and the ``output'' bit $a$ that are generated in single runs of our experiments, we will be concerned with conditional probabilities $P(a|\mathbf{x)}$, i.e., the probabilities that the decoding device produces output $a$ conditioned on the inputs having been set to $\mathbf{x}$. We will henceforth refer to complete lists of such probability distributions $\left(P(a|\mathbf{x}) \right)_{\mathbf{x}}$ as {behaviors}. 

\begin{figure}
\centering
\begin{subfigure}{.5\textwidth}
  \centering
  \includegraphics[width=.9\linewidth, angle=0]{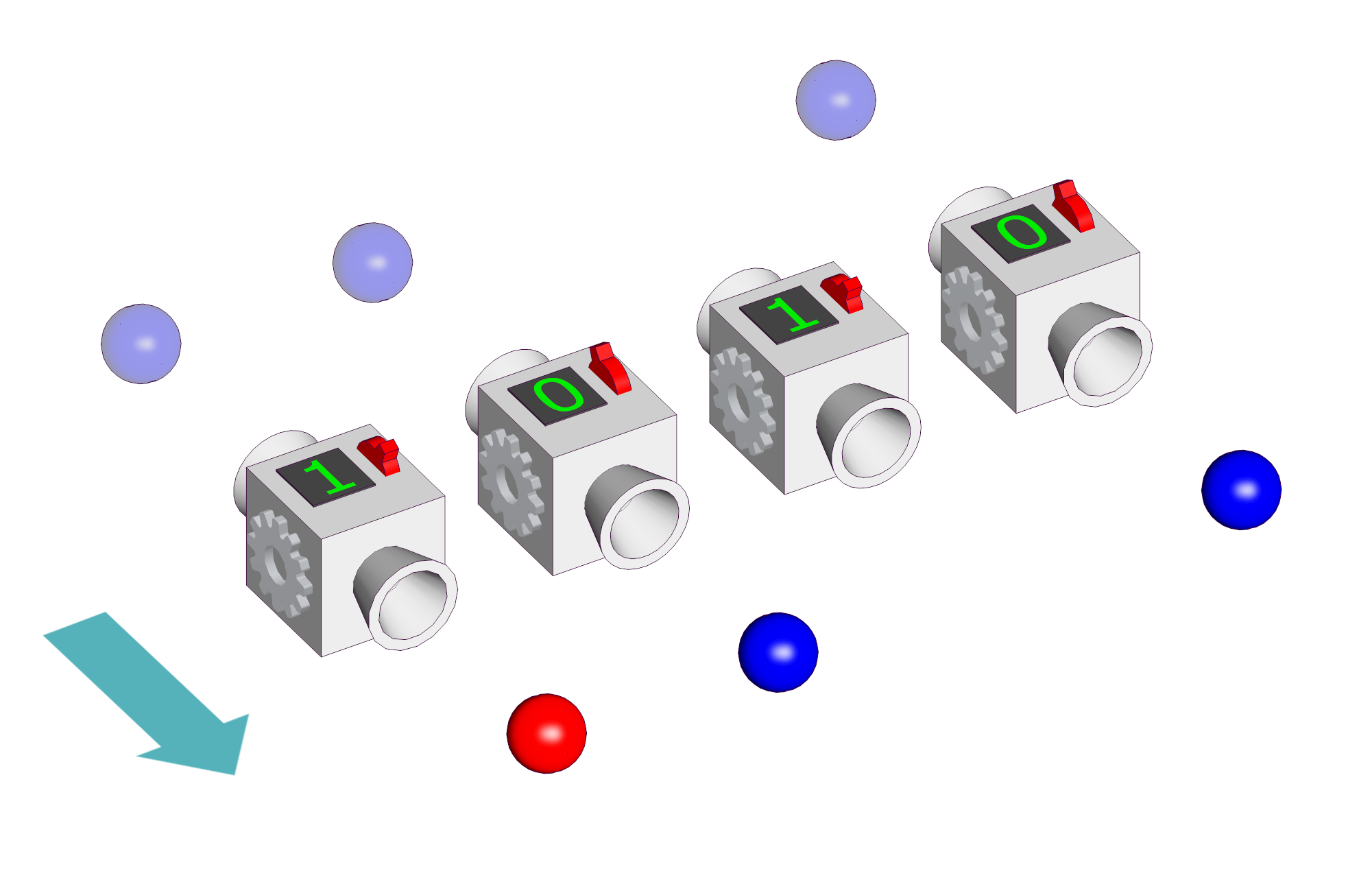}
  \caption{three classical particles\newline}
  \label{fig:sub1}
\end{subfigure}%
\hfill
\begin{subfigure}{.5\textwidth}
  \centering
  \includegraphics[width=.9\linewidth, angle=0]{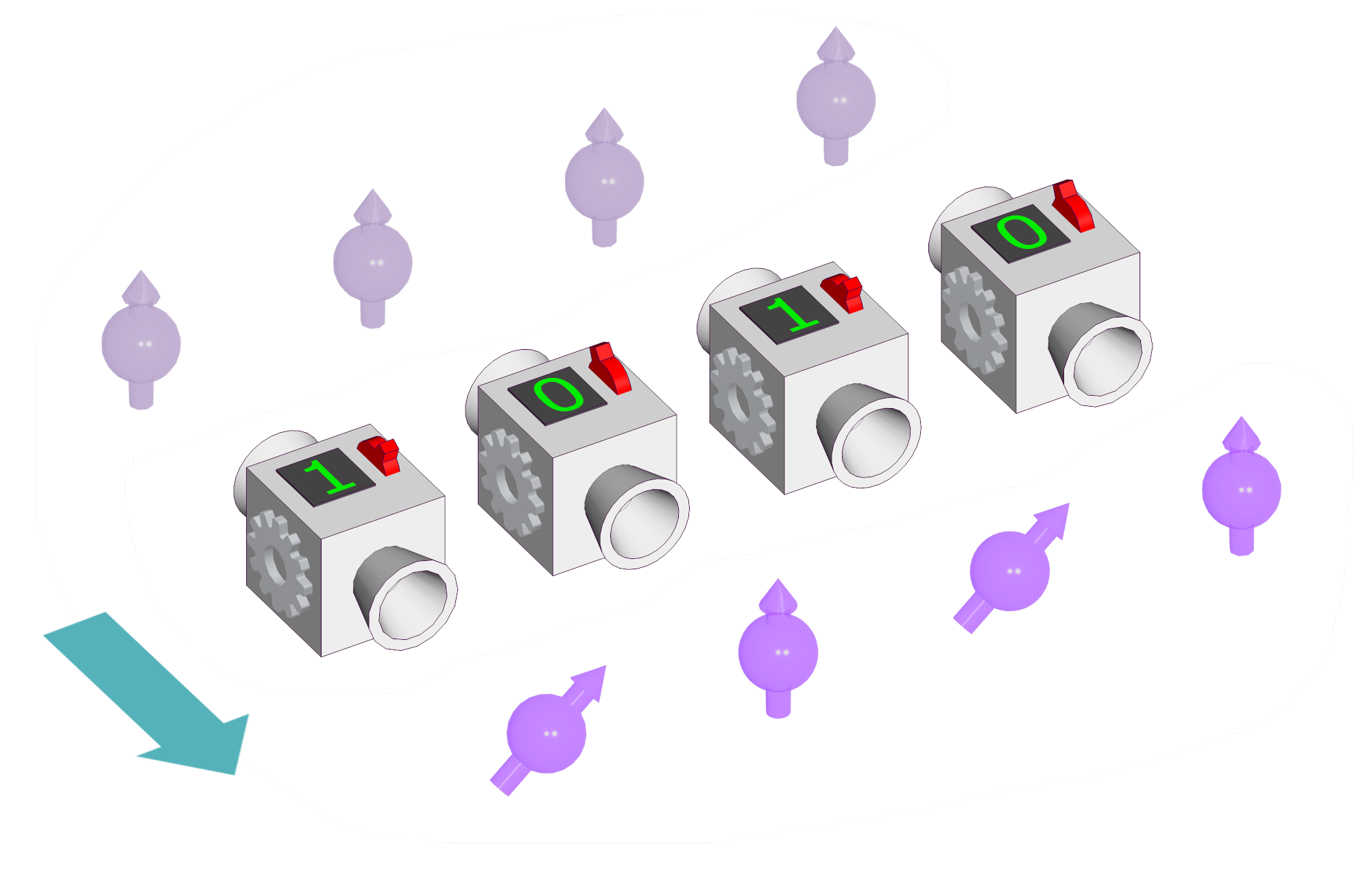}
  \caption{one quantum spin 1/2 particle in spatial superposition}
  \label{fig:sub2}
\end{subfigure}
\caption{Examples of our setup for $N=4$ bits featuring, respectively, three classical particles and one quantum particle.}
\label{fig:setup}
\end{figure}

Note that the set of all logically possible behaviors---namely those that satisfy the normalization constraint for probability measures---can be represented as a convex subset $\mathcal{L}_N$ of a $2^{N+1}$-dimensional real vector space: indeed, a canonical basis can be so chosen that the components of any vector $\Vec{P} \in \mathcal{L}_N$ relative to this basis are equal to the elements $P(a|\mathbf{x})$ of some behavior. On the other hand, we expect the sets of behaviors that can be generated with the use of $K<N$ classical particles to be smaller than the logical set; i.e., we expect that some behaviors can be generated by the use of at least $N$ classical particles, but not by $K<N$ classical particles. In accord with definitions introduced in previous works \cite{horvat_quantum_2021, zhang_building_2022}\footnote{\changes{Note that our definition of $\CNK$ coincides with the definition of $\mathcal{C}_{N, K}^{(sep)}$, as introduced in Ref.\cite{zhang_building_2022}. However, the polytope thereby defined also turns out to coincide with $\CNK$ of Ref. \cite{zhang_building_2022}: Indeed, Proposition 2 in Ref. \cite{zhang_building_2022} guarantees that this is so for cases, such as ours, in which output $a$ can only take values in $\{ 0,1\}$. This is why we chose to simplify the notation, and refer to our polytope as $\CNK$
}}, let us define the classical set of behaviors $\mathcal{C}_{N,K}$ as the largest subset of $\mathcal{L}_N$ whose elements $\Vec{P}$ all satisfy the following condition: there exists a set of non-negative and normalized weights $q_{j_1\cdots j_K}$ and a set of probability distributions $P(a|x_{j_1}\cdots x_{j_K})$, such that each canonical component $P(a|\mathbf{x})$ of $\Vec{P}$ satisfies

\begin{equation}
    P(a|\mathbf{x})=\sum_{j_1\neq\cdots\neq j_K=1}^N q_{j_1\cdots j_K} P(a|x_{j_1}\cdots x_{j_K}).
\end{equation}

Each element of $\mathcal{C}_{N,K}$ is producible by $K$ classical particles being sent with probability $q_{j_1\cdots j_K}$ through locations that host inputs $x_{j_1},...,x_{j_K}$; conversely, all behaviors that can be generated by $K$ particles lie in $\mathcal{C}_{N,K}$. It is simple to inspect that $\mathcal{C}_{N,K} \subset \mathcal{C}_{N,K+1} \subset \mathcal{L}_N$ for all $K<N$, and that all sets $\mathcal{C}_{N,K}$ are convex polytopes.

Furthermore, it holds that
\begin{equation}
    \mathcal{C}_{N,K} \subseteq \mathcal{J}_{N,K},
\end{equation} 
where $\mathcal{J}_{N,K}$ is the ``$K$th order interference set'' \cite{zhang_building_2022}---that is, the largest subset of $\mathcal{L}_N$ whose elements satisfy 
\begin{equation}\label{interf}
    \sum_{x_{j_1}\cdots x_{j_M}=0}^1 (-1)^{\sum_lx_{j_l}}P(0|\mathbf{x})=0,
\end{equation}
for all $M>K$ and all subsets $\left\{j_1,...,j_M \right\}\subseteq \left\{1,...,N \right\}$. $\mathcal{J}_{N,K}$ contains those behaviors that exhibit $K$th order interference, in a generalized ``semi-device-independent'' sense of interference \cite{horvat_interference_2021, zhang_building_2022}. Accordingly, it has been shown that the set of behaviors that can be generated in quantum versions of the above experiments with one quantum particle is fully contained in $\mathcal{J}_{N,2}$, which is a consequence of the order of interference of quantum theory being limited to $K=2$. On the other hand, $\mathcal{C}_{N,1} \subseteq \mathcal{J}_{N,1}$ is related to the fact that a classical particle cannot ``interfere with itself.'' We will have more to say on the relationship between quantum-mechanically producible behaviors and $\mathcal{J}_{N,2}$ in Sec. III. But before getting into all of this, let us lay out some observations that concern $\mathcal{C}_{N,K}$ and $\mathcal{J}_{N,K}$.


\subsection{$K$-juntas and symmetries of classical polytopes}
\textbf{Remark 1} (On some properties of $\CNK$). Let us point out some previously unnoticed properties of $\CNK$. First, note that a behavior $P\in \mathcal{L}_N$ is a vertex of $\CNK$ if and only if its components satisfy $P(a|\mathbf{x})=\delta_{a,f(\mathbf{x})}$ for some $K$-junta \cite{odonnell_analysis_2014}, that is, for some Boolean function $f: \{0,1\}^N \to \{ 0,1\}$ that effectively depends only on $K$ variables:
\[ f(x_{j_k} =0) =  f(x_{j_k} =1),~~~~~ k = 1,..., N-K\]
for some subset $\left\{j_1,...,j_{N-K} \right\} \subseteq \left\{1,...,N \right\}$.
In other words, the set of vertices of $\CNK$ is in one-to-one correspondence with the set of $K$-juntas. Since the number of Boolean functions depending on all their $K$ variables is given by $ \sum_{r=0}^K (-1) ^{K-r} \binom{K}{r} 2^{2^r}$ [\cite{harrison_introduction_1965},p. 169] , a simple counting argument shows that the number of $K$-juntas and thus the number of vertices of $\CNK$ amounts to 
\begin{equation}
    \vert V(\CNK)\vert = \sum_{k=0}^K \binom{N}{k} \sum_{r=0}^k (-1) ^{k-r} \binom{k}{r} 2^{2^r}.
\end{equation}

The superexponential growth of the number of vertices ($\vert V\vert=\mathcal{O}\left(2^{2^K}\right)$) makes an explicit numerical calculation of properties of $\CNK$ hardly tractable even for low $N$. Nevertheless, there are a few observations that we will tentatively state about the geometry of the polytopes. 
We will first comment on the symmetries of $\CNK$, that is, the group of linear transformations that map the polytope to itself. Note that $\CNK$ exhibits an inversion symmetry around its center:
The transformation \begin{equation}
    I: P(a | \mathbf{x}) \mapsto 1- P( a | \mathbf{x}) = P(\neg a | \mathbf{x})
\end{equation}
is a representation of the cyclic group $Z_2$ that maps vertex associated to $K$-junta $f$ to vertex associated with $K$-junta $\neg f$. Moreover, we can identify further symmetry elements by introducing the hyperoctahedral group $B_N$ of order $N$ \cite{graczyk_hyperoctahedral_2005} [see also Ref. \cite{harrison_introduction_1965}, p. 148], which is the semidirect group product $S_N \ltimes Z_2^N$ of the symmetric group $S_N$ and $N$ copies of $Z_2$. The action of $b = \pi \ltimes(s_1,...,s_N)\in B_N$ on a Boolean function $f$ as well as its thereby induced representation $R_b$ is given by
\begin{align}
    &b f(x_1, x_2, x_3, \dots, x_N)\nonumber \\&~= ~f(s_1(x_{\pi(1)}), s_2(x_{\pi(2)}), s_3(x_{\pi(3)}), \dots, s_N(x_{\pi(N)})) \\
    &R_b P(a|x_1, x_2, x_3, \dots,x_N)\nonumber \\&~=~ P(a | s_1(x_{\pi(1)}), s_2(x_{\pi(2)}), s_3(x_{\pi(3)}), \dots, s_N(x_{\pi(N)}))~
\end{align}
with $s_i \in Z_2$ either being identity $s_i(x) = x$ or inversion $s_i(x) = \neg x$.
Since for $b \in B_N$ and $f$ being a $K$-junta, $b f$ is also a $K$-junta, the polytope $\CNK$ is symmetric under the representation $R_b$ as vertices are mapped to vertices in a bijective manner. This symmetry points to the following construction of $\CNK$ in terms of the convex hull of $\binom{N}{K}$ hypercubes embedded into the whole space, formalized in the following lemma: 
\begin{lemma}\begin{align*}
    \CNK = \text{conv} \left\{ R_b ~\left(\mathcal{C}_{K,K} \oplus 0^{N-K}\right) \vert b = \sigma \ltimes {1\!\!1}^N, \sigma \in S\right\},
\end{align*}\label{cnk_convex_hull}
where $\sigma \in S\subset S_N$ if there exists $\left\{g_1,...,g_K \right\}\subset\left\{1,...,N \right\}$, such that $\sigma(i)=g_i$, for all $i=1,...,K$.\end{lemma}
\begin{proof}
Let $f(\mathbf{x})$ be an arbitrary $K$-junta that depends on variables $\left\{g_1,...,g_K \right\}\subset\left\{1,...,N \right\}$ and let $\sigma \in S_N$ be any permutation that satisfies $\sigma(i)=g_i$. Define $f'(x_1,...,x_N)=f(x_{\sigma^{-1}(1)},...,x_{\sigma^{-1}(N)})$, which is by construction a $K$-junta that depends on its first $K$ variables. It holds that $f(\mathbf{x})=(\sigma \ltimes \identity^N) f'(\mathbf{x})$. The statement in Lemma 1 now follows by noting that the set of all vertices of $\CNK$ corresponds to the set of all $K$-juntas, and that the set of all vertices of the $\mathcal{C}_{K,K}\oplus 0^{N-K}$ hypercube corresponds to the set of $K$-juntas that depend on their first $K$ variables.
\end{proof}
An example of the above construction is illustrated in Figure \ref{fig:octahedron_sqares}. As is known from previous investigations, $\mathcal{C}_{N,1}$ is a hyperoctahedron of dimension $N+1$ \cite{horvat_quantum_2019}. 
Its full symmetry group is $B_{N+1}$ and is thus larger than $B_N$, which was previously discussed. Consequently, the inversion symmetry and $B_N$ only constitute subgroups of the polytope's complete symmetry, leaving an exhaustive description of the symmetry group of $\CNK$ for now elusive.

\begin{figure}
    \centering
    \begin{tikzpicture}
        \node[anchor=south west, inner sep=0] (image) at (0,0) {\includegraphics[width=0.6\linewidth]{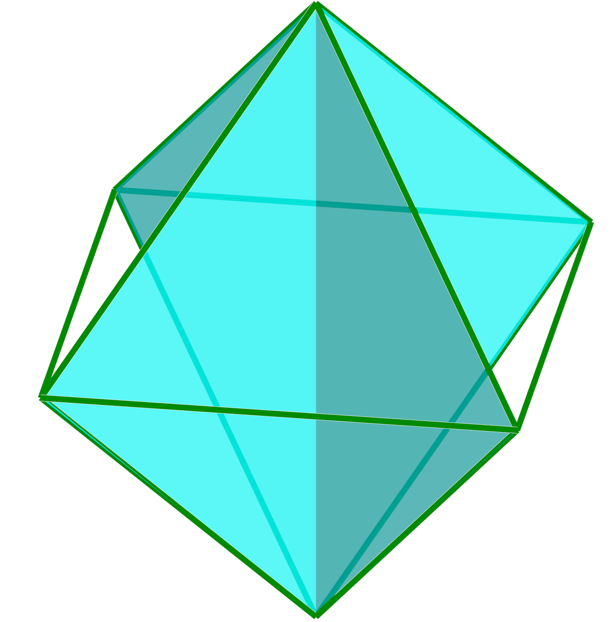}};
        \begin{scope}[x={(image.south east)},y={(image.north west)}]
            \node[black, font=\bfseries] at (0.52,-0.05) {$\bot$}; 
            \node[black, font=\bfseries] at (0.52,+1.05) {$\top$};
            \node[black, font=\bfseries] at (0.0,+0.35) {$y$};
            \node[black, font=\bfseries] at (0.1,+0.70) {$x$};
            \node[black, font=\bfseries] at (0.94,+0.30) {$\neg x$};
            \node[black, font=\bfseries] at (1.04,+0.63) {$\neg y$};
            the text here
        \end{scope}
    \end{tikzpicture}
    \caption{An octahedron ($\mathcal{C}_{2,1}$) as the convex hull of two squares ($\mathcal{C}_{1,1}$). The vertices correspond to Boolean 1-juntas in two variables.}
    \label{fig:octahedron_sqares}
\end{figure}

Furthermore, we want to mention that the search for the seven-dimensional polytope $\mathcal{C}_{3,2}$ with $f$-vector \footnote{An $f$-vector of a polytope indicates how many faces of given dimension it has, making it a useful characteristic for identifying polytopes} $(38, 408, 1608, 2764, 2208, 776, 96)$ in polytope databases \cite{klitzing_polytopes_2025,communityproject_polytope_2025} has not yielded any result, indicating that it has not been cataloged before. In case it has not been named, we suggest the name $(N,K)$-juntatope for $\CNK$. 



\bigskip

\textbf{Remark 2} (On some properties of $\mathcal{J}_{N,K}$): Now we will proceed with the interference polytopes $\mathcal{J}_{N,K}$ and show that their characterizing hyperplane equalities \eqref{interf} can be recast in a more compact form. Let $[k]$ denote the binary representation of the integer $k$, $[ k ]_j$ the $j$th bit thereof, and $\hamwgt{k}~$ the number of ones in $k$'s binary representation, also known as the ``Hamming weight.'' Consider a behavior $\Vec{P} \in \mathcal{L}_N$ and define a projection thereof as $\Vec{P}' = \left(P(1|~ [0]~),...,P(1|~ [2^N -1]~)\right)$. It can easily be inspected that $\Vec{P} \in \mathcal{J}_{N,K}$ if and only if the projection $\Vec{P}'$ satisfies
\begin{equation}\label{interf2}
    (H\Vec{P}')_j = 0, ~~~\text{for}~ \hamwgt{j} > K,
\end{equation}
where $H$ is the $2^N \times 2^N$ Hadamard matrix [\cite{horadam_hadamard_2007}, p. 11] with
\begin{equation}
    H_{i,j} = 2^{-N/2} (-1)^{\sum_{k=0}^{N-1} [i]_k [j]_k}.
\end{equation}

Furthermore, if $\Vec{P}$ is deterministic---that is, if its components satisfy $P(a|\mathbf{x})=\delta_{a,f(\mathbf{x})}$ for some Boolean function $f$---condition \eqref{interf2} is equivalent to $f$ having a bounded Boolean Fourier degree as defined in \cite{odonnell_analysis_2014}: $\deg f\leq K$. Since the Fourier degree of a Boolean $K$-junta $f$ is bounded as $\deg f \leq K$, this implies the aforementioned relation $\mathcal{C}_{N, K} \subseteq \mathcal{J}_{N, K} $. The converse is however not true, as there are Boolean functions that have Fourier degree bounded by $K$, but that are not $K$-juntas. Nevertheless, a theorem about the Fourier degree of Boolean functions [\cite{odonnell_analysis_2014}, p. 70] implies that

\begin{equation}
    \left(\mathcal{J}_{N, K} \cap \mathcal{D}_N\right) \subset \mathcal{C}_{N, K2^{K-1}},
\end{equation}
where $\mathcal{D}_N$ is the set of all deterministic behaviors in $\mathcal{L}_N$. 
It should accordingly be noted that there are in general behaviors in $\mathcal{J}_{N, K}$ that cannot be decomposed as the convex sum of deterministic behaviors in $\mathcal{J}_{N, K}$ (i.e. $\text{conv}\left(\mathcal{J}_{N, K} \cap \mathcal{D}_N\right) \neq\mathcal{J}_{N, K}$) or equivalently, that some vertices of $\mathcal{J}_{N, K}$ are not deterministic. One such non-deterministic vertex is explicitly constructed later in the proof of Theorem \ref{theorem:2nd_order_viol}.
\subsection{Facet inequalities and oracle games}
\textbf{Remark 3} (Connection between hyperplanes and oracle games): Finally, let us note something that will come of use later, and that concerns behaviors that lie in the intersection of $\mathcal{L}_N$ and any arbitrary hyperplane, including those that correspond to facets of $\CNK$, or those that characterize $\mathcal{J}_{N,K}$. The condition that any vector needs to satisfy in order to lie within a general hyperplane can be generally cast as

\begin{equation}
    \sum_{ \mathbf{x}} \left(b^{(1)}_{\mathbf{x}} P(1 | \mathbf{x}) +  b^{(0)}_{\mathbf{x}} P(0 | \mathbf{x})\right) = B,
\end{equation}
where $b^{(i)}_{\mathbf{x}},B$ are arbitrary real coefficients. Using the normalization conditions for behaviors, it can be shown that a behavior $\Vec{P} \in \mathcal{L}_N$ is an element of the above hyperplane if and only if the following holds:
\begin{equation}
    \sum_{ v(\mathbf{x})=1} q^{(1)}_{\mathbf{x}} P(1 | \mathbf{x}) +  \sum_{ v(\mathbf{x})=0} q^{(0)}_{\mathbf{x}} P(0 | \mathbf{x}) = C,
\end{equation}
where $v:\left\{0,1\right\}^{\times N}\rightarrow \left\{0,1\right\}$ is a function that satisfies $v(\mathbf{x})=0 \leftrightarrow b^{(0)}_{\mathbf{x}}>b^{(1)}_{\mathbf{x}}$, and where the coefficients $q^{(a)}_{\mathbf{x}}$ and $C$ are defined as 
\begin{align}
    q^{(a)}_{\mathbf{x}}=\frac{b^{(a)}_{\mathbf{x}}-b^{(a \oplus 1)}_{\mathbf{x}}}{\sum_{ v(\mathbf{x})=1}(b^{(1)}_{\mathbf{x}}-b^{(0)}_{\mathbf{x}})+\sum_{ v(\mathbf{x})=0}(b^{(0)}_{\mathbf{x}}-b^{(1)}_{\mathbf{x}})}, \\ C=\frac{B-  \sum_{\mathbf{x}} \min{(b^{(0)}_{\mathbf{x}} , b^{(1)}_{\mathbf{x}})}}{\sum_{ v(\mathbf{x})=1}(b^{(1)}_{\mathbf{x}}-b^{(0)}_{\mathbf{x}})+\sum_{ v(\mathbf{x})=0}(b^{(0)}_{\mathbf{x}}-b^{(1)}_{\mathbf{x}})}.
\end{align}

By construction, all the coefficients $q^{(a)}_{\mathbf{x}}$ that appear on the LHS are non-negative and sum up to one. The condition that a behavior $\Vec{P}$ needs to satisfy in order to be an element of a hyperplane can thus be interpreted as a constraint on the average probability that $\Vec{P}$ wins a game whose goal is to output $a$ for all those inputs $\mathbf{x}$ that satisfy $v(\mathbf{x})=a$, whereby the inputs are priorly distributed according to $\max (q^{(0)}_{\mathbf{x}},q^{(1)}_{\mathbf{x}})$. In other words, the goal of the game is the computation of function $v(\mathbf{x})$, for a given distribution of inputs $\mathbf{x}$. To each hyperplane, one can therefore associate its corresponding game and a constraint on its winning probability. Since any polytope can be fully characterized by facet inequalities, it follows that $\CNK$ and $\mathcal{J}_{N,K}$ can be characterized by bounds on the probabilities of winning the here introduced games. In the next section we will analyze one such game that corresponds to a facet of $\mathcal{C}_{N,N-1}$.
\section{Quantum behaviors}

Let us proceed with the quantum mechanical analysis of the experiments presented in Sec. II. We will specifically be concerned with setups in which the correlations between the input bits $\mathbf{x}$ and the output bit $a$ are established by one quantum particle with $d$ orthogonal ``internal'' states. The set of pure states associated to the particle will thus be given by $\mathcal{H}\equiv\mathbb{C}^N\otimes\mathbb{C}^d$, where $\mathbb{C}^N$ models the particle's spatial degree of freedom: We will henceforth denote with $\left\{\ket{1},...,\ket{N} \right\} \subset \mathbb{C}^N$ the basis that corresponds to ``interferometric-like'' semiclassical trajectories possibly traversed by the particle. Furthermore, the input bits $\mathbf{x}$ will be encoded in spatially separated devices that implement unitary transformations on the internal degrees of freedom of the said particle.\footnote{Our investigation is thus aimed at a smaller class of experiments than the one investigated in Ref. \cite{zhang_building_2022}, which studies general number preserving operations, unitary ones being only special cases thereof.} More precisely, each device--say, the $j$th one---is assumed to locally implement one of the two unitary transformations $U_j^{(0)}$,$U_j^{(1)}:\mathbb{C}^d\rightarrow \mathbb{C}^d$ depending on the value of $x_j$. The joint action of the $N$ unitaries is thus assumed to be representable in terms of a controlled gate $U_{\mathbf{x}}:\mathcal{H}\rightarrow \mathcal{H}$ that satisfies 
\begin{equation}
  U_{\mathbf{x}}=\sum_{j=1}^N\ket{j}\bra{j}\otimes U_j^{(x_j)},  
\end{equation}
for some list of unitaries $\left(U_j^{(x_j)}\right)$. Finally, the decoding device will be unrestricted and thus modeled by an arbitrary POVM $\left\{\Pi_0, \Pi_1 \right\}$ on the particle's state space.

The temporal stages of the experiments will accordingly be subdivided as follows:\\
$t_1$: the settings of the unitary devices are fixed in accord with input bits $\mathbf{x}$, and the source emits a particle in state $\rho$ that is independent of $\mathbf{x}$;\\
$t_2$: the particle interacts with the unitary devices, its state being thereby transformed into $\rho_{\mathbf{x}}\equiv U_{\mathbf{x}}\rho U_{\mathbf{x}}^{\dagger}$;\\
$t_3$: the particle interacts with the decoding device and output bit $a$ is produced according to distribution $P(a|\mathbf{x})=\Tr\left[\Pi_a\rho_{\mathbf{x}} \right]$.

Let us denote with $\Qd$ the set of all behaviors that can be generated in the above experiments by all quantum-mechanically allowed triples $\left(\rho, U_{\mathbf{x}}, \left\{\Pi_0,\Pi_1 \right\} \right)$, for fixed number of inputs $N$ and dimension of the particle's internal space $d$. Whereas several results have already been proven that partially characterize $\mathcal{Q}_{N,1}^1$ \cite{horvat_quantum_2019,zhang_building_2022}, the case of arbitrary $d$ will only be explored in what follows. We will do so by investigating---both numerically and analytically---the maximal amounts of quantum-mechanical violations of a family of facet inequalities that partially characterize the classical polytopes $\CNK$. Before getting into these specific inequalities, let us lay out some general simplifications that turn out to hold for the maximization over $\Qd$ of the violation of any hyperplane inequality.

Recall that the condition that a behavior $\Vec{P}\in \mathcal{L}_N$ needs to satisfy in order to be contained in an arbitrary hyperplane can be cast in the following game-like form:
\begin{equation}\label{hyperplane}
    \sum_{ v(\mathbf{x})=1} q^{(1)}_{\mathbf{x}} P(1 |\mathbf{x}) +  \sum_{ v(\mathbf{x})=0} q^{(0)}_{\mathbf{x}} P(0 |\mathbf{x}) = C,
\end{equation}
where $q^{(a)}_{\mathbf{x}}\geq 0$ and $\sum_{a,v(\mathbf{x})=a} q^{(a)}_{\mathbf{x}}=1$. It is simple to check that the quantity on the LHS generated by an arbitrary quantum strategy $S_N^{(d)}=\left(\rho, U_{\mathbf{x}}, \left\{\Pi_0,\Pi_1 \right\} \right)$---henceforth denoted by ``winning probability'' $P_W(S_N^{(d)})$---can be cast as 
\begin{equation}\label{pwinning}
    P_W(S_N^{(d)})=q_0\Tr \left[\Pi_0\sigma_0\right] + q_1\Tr \left[\Pi_1\sigma_1\right],
\end{equation}
where $q_a\equiv \frac{1}{\sum_{v(\mathbf{x})=a}q^{(a)}_{\mathbf{x}}}$ and $\sigma_a \equiv \frac{1}{q_a}\sum_{v(\mathbf{x})=a}q^{(a)}_{\mathbf{x}}\rho_{\mathbf{x}}$.

$P_W(S_N^{(d)})$ quantifies the probability of correctly discriminating quantum states $(\sigma_0, \sigma_1)$---priorly distributed according to $(q_0,q_1)$---with the use of POVM $\left\{ \Pi_0,\Pi_1\right\}$. In what follows, we will study the optimization of this winning probability over the quantum set $\Qd$. Even though we will explicitly be concerned with the maximization of the LHS of Eq. \eqref{hyperplane}, all of the following will also apply to the minimization thereof, as the latter would reduce to the maximization of the same quantity up to the exchange of $q_{\mathbf{x}}^{(0)}$ and $q_{\mathbf{x}}^{(1)}$.

First, note that the form of the RHS of \eqref{pwinning} already suggests that the optimization over POVMs can be carried out by a straightforward application of Helstrom's bound \cite{helstrom_quantum_1969}. Namely, for a fixed pair of input state and unitary $\Tilde{S}_N^{(d)}\equiv\left(\rho, U_{\mathbf{x}} \right)$, the maximal value of $P_W(S_N^{(d)})$ is given by

\begin{equation}
\label{pw_helstrom}
    P_W(\tilde{S}_N^{(d)})\equiv\max_{\left\{\Pi_0,\Pi_1 \right\}} P_W(S_N^{(d)}) = \frac{1}{2} + \frac{1}{2} || q_0\sigma_0 - q_1\sigma_1||_1,
\end{equation}
where $||\cdot||_1$ denotes the  trace matrix norm. The optimization problem is thereby reduced to the optimization of $P_W(\tilde{S}_N^{(d)})$ over input states $\rho$ and unitaries $U_{\mathbf{x}}$. The latter problem, however, is generally not easily tractable. Nevertheless, several analytic simplifications reduce its complexity, thereby easing the subsequent numerical optimization, as summarized in the following Lemma.

\begin{lemma}
\label{max_p_wlog}
    The maximum value of $P_W(\tilde{S}_N^{(d)})$ can be obtained for $\tilde{S}_N^{(d)}=\left(\ket{\psi}\bra{\psi}, U_{\mathbf{x}}\right)$ that satisfies
    \begin{itemize}
        \item $\ket{\psi}=\sum_i \sqrt{p_i} \ket{i}\ket{\chi}$, for some $\ket{\chi}\in \mathbb{C}^d$ and $p_i\geq 0$, $\sum_i p_i=1$,
        \item $U_{\mathbf{x}}=\sum_i \ket{i}\bra{i} \otimes \left(U_i\right)^{x_i}$, for some list of unitaries $\left(U_1,...,U_N \right)$.
    \end{itemize}
\end{lemma}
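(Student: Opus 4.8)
The plan is to obtain the two claimed normalizations one after the other: purity of the optimal input state from a convexity argument, and then the product-state / trivial-``off''-unitary form from a gauge symmetry of the parametrization $(\rho,U_{\mathbf{x}})$ that leaves $P_W(\tilde S_N^{(d)})$ unchanged. For purity, fix the controlled unitaries $U_{\mathbf{x}}$; since $q_a\sigma_a=\sum_{v(\mathbf{x})=a}q^{(a)}_{\mathbf{x}}\rho_{\mathbf{x}}$ and $\rho_{\mathbf{x}}=U_{\mathbf{x}}\rho U_{\mathbf{x}}^{\dagger}$ is linear in $\rho$, the map $\rho\mapsto q_0\sigma_0-q_1\sigma_1$ is real-linear, so $\rho\mapsto P_W(\tilde S_N^{(d)})=\tfrac12+\tfrac12\|q_0\sigma_0-q_1\sigma_1\|_1$ is convex on the compact convex set of density operators on $\mathcal{H}$ and attains its maximum at an extreme point, i.e.\ at a pure state. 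As $P_W$ is continuous and the joint domain over $(\rho,U_{\mathbf{x}})$ is compact, the global maximum exists and is attained at some $(\ket{\psi}\bra{\psi},U_{\mathbf{x}})$; fix such a pair.

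Next I would exploit the following gauge freedom. For unitaries $V_i,W_i\in U(d)$ set $V=\sum_i\ket{i}\bra{i}\otimes V_i$ and $W=\sum_i\ket{i}\bra{i}\otimes W_i$. Replacing $\ket{\psi}\mapsto V^{\dagger}\ket{\psi}$ and $U_{\mathbf{x}}\mapsto WU_{\mathbf{x}}V$ keeps $U_{\mathbf{x}}$ of the admitted controlled form, with local blocks $\tilde U_i^{(x_i)}=W_iU_i^{(x_i)}V_i$, and sends $\rho_{\mathbf{x}}=U_{\mathbf{x}}\ket{\psi}\bra{\psi}U_{\mathbf{x}}^{\dagger}\mapsto W\rho_{\mathbf{x}}W^{\dagger}$ for every $\mathbf{x}$ (the inner $VV^{\dagger}$ cancels). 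Hence each $\sigma_a\mapsto W\sigma_aW^{\dagger}$, and by unitary invariance of $\|\cdot\|_1$ the value $P_W(\tilde S_N^{(d)})$ is preserved. To finish, write the optimal pure state as $\ket{\psi}=\sum_i\sqrt{p_i}\,\ket{i}\ket{\chi_i}$ with $p_i=\bra{\psi}(\ket{i}\bra{i}\otimes\identity)\ket{\psi}\ge 0$, $\sum_ip_i=1$, and $\ket{\chi_i}\in\mathbb{C}^d$ a unit vector when $p_i>0$ (arbitrary otherwise). Fix any unit $\ket{\chi}\in\mathbb{C}^d$, pick $V_i$ with $V_i\ket{\chi}=\ket{\chi_i}$, and set $W_i=(U_i^{(0)}V_i)^{\dagger}$. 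After this transformation the state becomes $V^{\dagger}\ket{\psi}=\sum_i\sqrt{p_i}\,\ket{i}\otimes V_i^{\dagger}\ket{\chi_i}=\sum_i\sqrt{p_i}\,\ket{i}\ket{\chi}$ and the ``off'' blocks become $\tilde U_i^{(0)}=W_iU_i^{(0)}V_i=\identity$, so $U_{\mathbf{x}}=\sum_i\ket{i}\bra{i}\otimes(\tilde U_i)^{x_i}$ with $\tilde U_i:=W_iU_i^{(1)}V_i$; both bullet points hold and $P_W$ is unchanged.

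The point to be careful with — more a structural subtlety than a genuine obstacle — is that the two normalizations are coupled: trivializing every $U_i^{(0)}$ by a left multiplication and only afterwards equalizing the internal branch states by a right multiplication would spoil $U_i^{(0)}=\identity$. The remedy is to do both at once through the single transformation $U_i^{(x_i)}\mapsto W_iU_i^{(x_i)}V_i$ together with $\rho\mapsto V^{\dagger}\rho V$ above (equivalently: apply the right factors $V_i$ first to align the internal states, then the left factors $W_i$, which conjugate every $\rho_{\mathbf{x}}$ by the fixed unitary $W$ and hence leave the Helstrom value invariant, but do not act on $\ket{\psi}$). The remaining verifications — normalization of $V^{\dagger}\ket{\psi}$, unitarity of $\tilde U_i$, and harmlessness of branches with $p_i=0$ — are routine.
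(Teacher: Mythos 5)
Your proof is correct and follows essentially the same route as the paper's: purity from extremality of the objective over density operators, then right gauge unitaries $V_i$ to disentangle the input state and left gauge unitaries $W_i$ (equivalently, absorbing $U_i^{(0)}$ into the measurement) to trivialize the ``off'' blocks. Your convexity argument for purity is in fact slightly more careful than the paper's, which asserts linearity of $P_W(\tilde S_N^{(d)})$ in $\rho$ even though after the Helstrom optimization the functional is only convex; the conclusion is the same.
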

\begin{proof}
    See Appendix \ref{appa}.
\end{proof}

The quantum strategy that optimizes the departure from an arbitrary hyperplane can thus be constructed out of a separable pure state with $N$ real coefficients $(\sqrt{p}_1,...,\sqrt{p}_N)$, and a list of $N$ $d\times d$ unitaries $(U_1,\dots,U_N)$. Now, note that the optimal violation does not depend independently on the input state and on the unitaries, but only on their combination $\rho_{\mathbf{x}}=\ket{\psi_{\mathbf{x}}}\bra{\psi_{\mathbf{x}}}$, with

\begin{equation}
    \ket{\psi_{\mathbf{x}}}=\sum_i \sqrt{p_i} \ket{i} \ket{\chi_i^{(x_i)}},
\end{equation}
where $\ket{\chi_i^{(0)}}=\ket{\chi}$ for all $i$, and where vectors $\ket{\chi_i^{(1)}}$ are arbitrary. An optimal quantum strategy can thus be fully specified by coefficients $(\sqrt{p}_1,...,\sqrt{p}_N)$ and by $(N+1)$ $d$-dimensional pure states---amounting to a total of $\left[N(2d-1)-1\right]$ independent real parameters. Besides a significant reduction in the dimension of the space over which $P_W(S_N^{(d)})$ is to be optimized, the above result has another interesting implication. Since $(N+1)$ $d$-dimensional vectors span at most a $(N+1)$-dimensional subspace of $\mathbb{C}^d$, it follows that for any strategy $S$ that uses a $d$-dimensional internal space, there is a strategy $S_N^{(d)'}$ that uses a $(N+1)$-dimensional internal space, and such that $P_W(S_N^{(d)})=P_W(S_N^{(d)'})$. Since this whole discussion holds for game-like quantities induced by arbitrary hyperplanes \eqref{hyperplane}, we have thereby proved that a behavior $P \in \Qd$ is an element of a hyperplane only if there is a behavior $P' \in \mathcal{Q}_{N,1}^{N+1}$ that is also an element of that hyperplane. In other words, no hyperplane separates any one element of $\mathcal{Q}_{N,1}^{d>N+1}$ from all the elements of $\mathcal{Q}_{N,1}^{N+1}$. By the hyperplane separation theorem, this implies the following relation:

\begin{lemma}
     $\Qd \subseteq \mathcal{Q}_{N,1}^{N+1}$.
\end{lemma}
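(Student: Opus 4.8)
The plan is to combine Lemma~\ref{max_p_wlog}, a dimension count on the internal states that an optimal strategy actually ``touches'', and the hyperplane separation theorem. By the hyperplane-to-game correspondence (Remark~3), every affine constraint on $\mathcal{L}_N$ can be put into the form \eqref{hyperplane}, so it suffices to understand, for each such hyperplane, the winning probability $P_W$ when optimised over the quantum set. By Helstrom's bound \eqref{pw_helstrom} and Lemma~\ref{max_p_wlog}, for fixed $d$ this optimum is attained at a strategy of the simplified form $\tilde S_N^{(d)}=(\ket{\psi}\bra{\psi},U_{\mathbf{x}})$ with $\ket{\psi}=\sum_i\sqrt{p_i}\ket{i}\ket{\chi}$ and $U_{\mathbf{x}}=\sum_i\ket{i}\bra{i}\otimes(U_i)^{x_i}$, followed by the Helstrom-optimal POVM $\{\Pi_0,\Pi_1\}$.

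The key step is that such a strategy lives inside an $(N+1)$-dimensional slice of $\mathbb{C}^d$. Indeed $\rho_{\mathbf{x}}=\ket{\psi_{\mathbf{x}}}\bra{\psi_{\mathbf{x}}}$ with $\ket{\psi_{\mathbf{x}}}=\sum_i\sqrt{p_i}\ket{i}\ket{\chi_i^{(x_i)}}$, $\ket{\chi_i^{(0)}}=\ket{\chi}$ and $\ket{\chi_i^{(1)}}=U_i\ket{\chi}$, so every $\rho_{\mathbf{x}}$ is supported on $\mathbb{C}^N\otimes V$, where $V:=\operatorname{span}\{\ket{\chi},\ket{\chi_1^{(1)}},\dots,\ket{\chi_N^{(1)}}\}$ has $\dim V\le N+1$. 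I would then project the decoding device: with $P_V$ the orthogonal projector onto $V$ and $\Pi:=\identity\otimes P_V$, the operators $\tilde\Pi_a:=\Pi\Pi_a\Pi$ form a legitimate two-outcome POVM on $\mathbb{C}^N\otimes V$ (positivity is clear and $\tilde\Pi_0+\tilde\Pi_1=\Pi(\Pi_0+\Pi_1)\Pi=\Pi$, the identity there), and since $\Pi\rho_{\mathbf{x}}\Pi=\rho_{\mathbf{x}}$ one has $\Tr[\tilde\Pi_a\rho_{\mathbf{x}}]=\Tr[\Pi_a\rho_{\mathbf{x}}]$ for all $a,\mathbf{x}$. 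Since the remaining data needed to run the experiment — the input state and the at most $N+1$ internal vectors $(U_i)^{x_i}\ket{\chi}$ — also lie in $\mathbb{C}^N\otimes V$, the whole strategy can be transported by an isometry $V\hookrightarrow\mathbb{C}^{N+1}$ (replacing each $U_i$ by any unitary on $\mathbb{C}^{N+1}$ sending the image of $\ket{\chi}$ to the image of $\ket{\chi_i^{(1)}}$ — both unit vectors — and letting the POVM output $0$ on the orthogonal complement), producing an element of $\mathcal{Q}_{N,1}^{N+1}$ with exactly the same behaviour, hence the same $P_W$.

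Putting these together, for every hyperplane game $\max_{\Qd}P_W=\max_{\mathcal{Q}_{N,1}^{N+1}}P_W$ (``$\le$'' from the reduction of the optimiser, ``$\ge$'' from the trivial inclusion $\mathcal{Q}_{N,1}^{N+1}\subseteq\Qd$), and by the min$\leftrightarrow$max symmetry noted right after \eqref{pw_helstrom} the same holds for minima. Thus $\Qd$ and $\mathcal{Q}_{N,1}^{N+1}$ have identical support functions, i.e.\ no hyperplane separates a point of $\Qd$ from $\mathcal{Q}_{N,1}^{N+1}$; invoking the hyperplane separation theorem for these (closed, convex) sets, together with $\mathcal{Q}_{N,1}^{N+1}\subseteq\Qd$, yields $\Qd=\mathcal{Q}_{N,1}^{N+1}$, in particular $\Qd\subseteq\mathcal{Q}_{N,1}^{N+1}$.

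The main obstacle is conceptual rather than computational: Lemma~\ref{max_p_wlog} only guarantees that the \emph{optimiser} of a given functional has the simplified form, so the dimension reduction a priori transfers only the extremal value in each direction, not individual behaviours — converting ``all support-function values agree'' into the set inclusion is exactly what the separation theorem does. Carrying this out cleanly requires two auxiliary facts that should be stated explicitly: that the quantum behaviour sets are closed (a compactness argument, the strategy manifold being compact and $\mathbf{x}\mapsto\Tr[\Pi_a\rho_{\mathbf{x}}]$ continuous) and that $\mathcal{Q}_{N,1}^{N+1}$ is convex; the latter needs mild care, since naive block-diagonal mixing of two dimension-$(N+1)$ strategies only lands in $\mathcal{Q}_{N,1}^{2N+2}$, so either one argues convexity directly or one settles for the (equally useful, in view of the polytope comparisons) statement $\Qd\subseteq\overline{\operatorname{conv}}\,\mathcal{Q}_{N,1}^{N+1}$, which the above argument delivers unconditionally. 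A routine but necessary secondary check is that the projected operators $\tilde\Pi_a$ really form a valid POVM after the isometric transport, and that the minimisation case is reduced correctly to the maximisation case.
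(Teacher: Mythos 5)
Your proposal follows essentially the same route as the paper: Lemma~\ref{max_p_wlog} reduces the optimiser of any hyperplane functional to a strategy whose internal states span at most an $(N+1)$-dimensional subspace, so every such functional attains the same extremum over $\Qd$ as over $\mathcal{Q}_{N,1}^{N+1}$, and the hyperplane separation theorem then converts the agreement of support functions into the set inclusion. Your added care --- the explicit projection of the POVM onto the spanned subspace and the transport by an isometry, plus the observation that the separation step tacitly requires $\mathcal{Q}_{N,1}^{N+1}$ to be closed and convex (failing which one only obtains $\Qd\subseteq\overline{\operatorname{conv}}\,\mathcal{Q}_{N,1}^{N+1}$) --- is a legitimate refinement of a point the paper's own argument leaves implicit.
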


Intuitively, if a particle with $d>N+1$ is used, then there are necessarily some dimensions of the particle's internal degree of freedom that are not employed in the transmission of the $N$ input bits $\mathbf{x}$. We now close the general discussion and focus on a particular family of inequalities that partially characterize the classical polytopes.

\subsection{Fingerprinting inequality}

We will now focus on a special class of facet inequalities of $\mathcal{C}_{N,N-1}$, dubbed the fingerprinting inequalities, given by
\begin{equation}
    \label{fingerprint_ineq}
    \frac{1}{N+1} P (0 | \underbrace{0\cdots 0}_{N}) +  \frac{1}{N+1} \sum_{k=1}^N P (1 | 0\cdots
 \underset{k\text{th}}{\underset{\uparrow}{1}} \cdots 0 ) \hspace{-0.5pt}\leq \hspace{-0.5pt}\frac{N}{N+1},
\end{equation} and corresponding to computing the function\\
$v(\mathbf{x}) = \bigwedge_i^N x_i$.

The interest in these inequalities lies in them partially characterizing classical polytopes for arbitrary $N$: the inspection of their quantum violation indeed tells us something about how one quantum particle compares to an arbitrary number $N$ of classical particles. Whereas previous works \cite{horvat_quantum_2021, zhang_building_2022} have shown that for any $N$, there is a behavior in $\mathcal{Q}_{N,1}^{1}$ that violates the inequality, the case of arbitrary $d$ and arbitrary input state coefficients $p_i$ has not yet been investigated. Our goal is to inspect how these two factors affect the violation of the fingerprinting inequalities. In what follows, we will present findings from numerical searches and an analytic result that holds for a restricted class of quantum strategies.

As proven in the previous subsection, we can assume without loss of generality that the optimal quantum violation of the fingerprinting inequality is obtained by a pure separable state 
\begin{equation}\label{input3}
    \ket{\psi}=\sum_i \sqrt{p_i} \ket{i}\ket{\chi},
\end{equation}
and by unitaries that satisfy $U_i^{(x_i)}=\left(U_i\right)^{x_i}$. Let $\ket{\chi_i}\equiv U_i \ket{\chi}$, for each $i$. Denoting the LHS of Eq. \eqref{fingerprint_ineq} with $P_W^*$, the optimal violation, for fixed input state and unitaries $\Tilde{S}_N^{(d)}=\left(\ket{\psi}, U_{\mathbf{x}} \right)$, is given by 
\begin{equation}
\label{pw_helstrom2}
    P^*_W(\tilde{S}_N^{(d)})\equiv \max_{\left\{\Pi_0,\Pi_1 \right\}} P_W^*(S_N^{(d)}) = \frac{1}{2} + \frac{1}{2} || q_1\sigma_1 - q_0\sigma_0||_1,
\end{equation}
where $(q_0,q_1)=\left(\frac{1}{N+1},\frac{N}{N+1} \right)$, and
\begin{align}
(\sigma_0,\sigma_1) = \left(\ket{\psi}\bra{\psi}, \frac{1}{N}\sum_i \ket{\psi_i}\bra{\psi_i} \right), \label{sigma0sigma1}
\end{align}
with $\ket{\psi_i}=\ket{\psi}+\sqrt{p}_i\ket{i}\left(\ket{\chi_i}-\ket{\chi} \right)$.

Let us denote the maximal violation of the fingerprinting inequality with 
\begin{equation}
    \delta_N^{(d)} \equiv\max_{\Tilde{S}_N^{(d)}}P^*_W(\tilde{S}_N^{(d)}) - \frac{N}{N+1}, \label{delta_d_N}
\end{equation}
for each pair $(N,d)$. \changes{It should be noted that, in comparison to previous studies \cite{horvat_quantum_2021,zhang_building_2022}, the fingerprinting inequality \eqref{fingerprint_ineq} and its violation $\delta$ is here scaled down by a factor of $(N+1)$ due to the normalization constraints.} As we could not compute the general form of $\delta_N^{(d)}$ analytically, we 
addressed the case of symmetric unitaries, i.e. the class of quantum strategies whose encoding unitaries satisfy $U_1=.\cdots=U_N\equiv U$, thereby proving the following theorem:
\begin{theorem}\label{thm:max_qm_uniform}
    Assuming $U_i=U_j$, the maximal violation of the fingerprinting inequality \eqref{fingerprint_ineq} is given by
    \begin{equation}
    \delta_N^{(d)}= \frac{1}{(N+1) \left(N^2-3 N+1\right)} ~~ \text{for}~ N > 3
    \label{expression_delta_d_symmetric}
    \end{equation}
    and is achieved for $d=2$ and a symmetric input state with $p_i = 1/N$.
\end{theorem}
\begin{proof}
    See Appendix \ref{appb}.
\end{proof}

We furthermore tackled the general problem---the one that does not assume $U_i=U_j$---numerically, obtaining the results displayed in figure \ref{fig:plot_violations} \footnote{\changes{Here are some details on the numerical optimization. We parametrize a quantum protocol by (1) unit vectors $\small{\ket{\chi_i^{(1)}} \in {\mathbb{C}}}$---represented in hyperspherical coordinates - which correspond to the outputs of the encoding unitaries' actions on some fixed $\ket{\chi^{(0)}}$; and by (2) the weights $p_1,...,p_{N-1}$, which characterize the input state. We then numerically optimize the trace norm in Eq. \eqref{pw_helstrom2}, where expressions \eqref{sigma0sigma1} need to be evaluated in every step. The optimization was carried out using the SLSQP algorithm with multiple randomly sampled initial conditions}}. Besides showing how $\delta_N^{(d)}$ depends on $N$ and $d$, \changes{the graph also displays the dependence of the violation on whether the input state's superposition coefficients are uniformly distributed or not (i.e., on whether $p_i=\frac{1}{N}$ or $p_i\neq\frac{1}{N}$)}. It is interesting to note that protocols with asymmetric superposition coefficients ($p_i\neq \frac{1}{N}$) can yield a higher violation than the ones with uniform coefficients---despite the fact that the fingerprinting inequality is itself symmetric under permutations of the inputs. Note that this gap between the performance with symmetric and asymmetric input states seems to occur only in the case of $d=1$, and that it does so only under the further condition that the unitaries themselves are asymmetric, in the sense of $U_i \neq U_{j\neq i}$. This observation is in concord with the one reached in \cite{chen_information_2024}, where the authors investigated the channel capacities of a setup similar to ours \changes{for $d=1$}, and found that a protocol with asymmetric encodings---analogous to our asymmetric unitaries---requires an asymmetric input state if it is to optimize the sought channel capacity. 

\changes{Importantly, note that the graph shows that the cases $d>1$ allow for a higher violation than the case $d=1$, the one that was previously discussed in Refs. \cite{horvat_quantum_2021, zhang_building_2022}. However, our numerical data indicate that the maximal violation is already achieved for $d=2$, and that it cannot be further improved by higher $d>2$: Indeed, the highest violation (\textcolor{Green}{$\bullet$}, \textcolor{Green}{$\times$}, \textcolor{red}{$\times$}) in figure \ref{fig:plot_violations} coincides with Eq. \eqref{expression_delta_d_symmetric}. We therefore propose the following conjecture: 
}

\begin{figure}
    \centering
    \includegraphics[width=8.5cm,height=!]{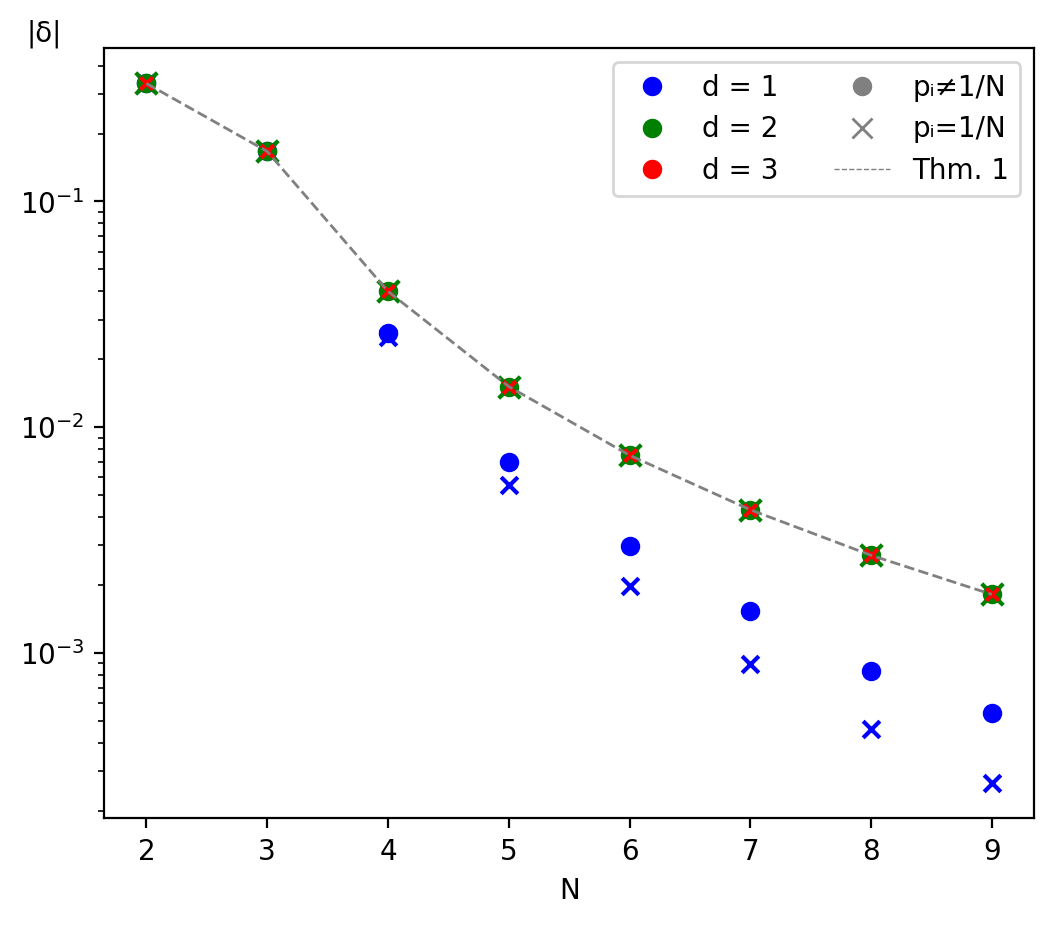}
    \caption{\changes{Numerically obtained }maximal violations $|\delta|$ of the fingerprinting inequality \eqref{fingerprint_ineq} for different quantum protocols. \changes{The blue cross (\textcolor{blue}{$\times$}) corresponds to the maximal violation calculated in previous works  \cite{horvat_quantum_2021,zhang_building_2022}. The improved violation represented by the dashed line corresponds to the result of our Theorem \ref{thm:max_qm_uniform}.}
    }
    \label{fig:plot_violations}
\end{figure}


\begin{conjecture}
\changes{The violation of the fingerprinting inequality \eqref{fingerprint_ineq} can be maximized with a quantum protocol with $d=2$ that obeys $U_i = U_j$. Therefore, the maximal violation is given by Theorem \ref{thm:max_qm_uniform}}.
\end{conjecture}
\subsection{General second-order interference behaviors}
In the previous subsections, we inspected some properties of $\Qd$ and more specifically, the amounts of quantum violations of the fingerprinting inequalities that correspond to particular facets of $\mathcal{C}_{N,N-1}$. In Sec. II, we had mentioned that all quantum behaviors that can be generated with the use of one particle cannot exhibit more than second-order interference, that is, $\Qd \subset \mathcal{J}_{N,2}$, for all $d$. This implies that an upper bound on quantum violations of the fingerprinting inequality is given by the maximal amount of violation thereof achievable by behaviors that lie in $\mathcal{J}_{N,2}$. Such a bound is presented in the following theorem.
\begin{theorem}
\label{theorem:2nd_order_viol}
The maximal violation $|\delta_{I2}|$ of the fingerprinting inequality \eqref{fingerprint_ineq} by behaviors that lie in $\mathcal{J}_{N,2}$ is given by 
\begin{equation}
    |\delta_{I2}| = \frac{2}{(N-2)\cdot (N-1)\cdot(N+1)} ~~ \text{for}~ N > 3\label{delta_I2_thm}
\end{equation}
and is achieved for 
\begin{equation}
   P (0 | \underbrace{0\cdots~0}_{N}) =\frac{2}{N^{2}-3N+2},~~~  P (1 | 0\cdots
 \underset{k\text{th}}{\underset{\uparrow}{1}} \cdots0 ) = 1. \label{P_I2:thm}
\end{equation}
\end{theorem}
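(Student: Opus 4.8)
The plan is to set up the optimization as a linear program over the low-dimensional slice of $\mathcal{J}_{N,2}$ that is actually relevant to the fingerprinting functional, solve it explicitly, and then exhibit a genuine behavior in $\mathcal{J}_{N,2}$ attaining the optimum. First I would observe that the fingerprinting functional only involves the $N+1$ probabilities $P(0|0\cdots0)$ and $P(1|0\cdots0\,1\,0\cdots0)$ appearing in \eqref{fingerprint_ineq}, so by the permutation symmetry of both the functional and the polytope $\mathcal{J}_{N,2}$ (which is invariant under the $B_N$ action of Remark 2, in particular under $S_N$) it suffices to optimize over $B_N$-symmetric behaviors: average any optimal behavior over $S_N$ and the value of the functional is unchanged while membership in $\mathcal{J}_{N,2}$ is preserved. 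A symmetric behavior is specified by a constant $\alpha := P(0|0\cdots0)$, a constant $\beta := P(1|e_k)$ for all weight-one inputs, and further constants for each higher Hamming weight; the key reduction is that the functional is increasing in $\beta$ and decreasing in $\alpha$ (since its $P(0|0\cdots0)$ coefficient is positive), so we want $\beta$ as large as possible, i.e. $\beta = 1$, and $\alpha$ as small as the interference constraints permit.

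Next I would use the Hadamard reformulation \eqref{interf2}: $\vec P \in \mathcal{J}_{N,2}$ iff $(H\vec P')_j = 0$ whenever $\hamwgt{j}>2$, where $\vec P' = (P(1|[0]),\dots,P(1|[2^N-1]))$. Imposing $P(1|e_k)=1$ (equivalently $P(0|e_k)=0$) and exploiting symmetry collapses these to a small number of linear equations relating $\alpha = 1-P(1|0\cdots0)$ to the symmetric values $\gamma_m := P(1|\text{weight-}m)$ for $m\ge 2$; the Hamming-weight-$\le 2$ Fourier coefficients are free, and the constraints for weights $3,\dots,N$ become a triangular linear system in the $\gamma_m$. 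Solving this system (for instance by testing $P(0|\mathbf{x}) = \delta_{\hamwgt{\mathbf{x}},0}\cdot c$ plus a correction, or more directly by writing the symmetric Fourier expansion in terms of Krawtchouk-type sums) pins down the minimal achievable $\alpha$. I expect the arithmetic to yield $\alpha = \tfrac{2}{(N-2)(N-1)}$ exactly, at which point plugging $\alpha$ and $\beta=1$ into the LHS of \eqref{fingerprint_ineq} and subtracting $\tfrac{N}{N+1}$ gives $|\delta_{I2}| = \tfrac{2}{(N-2)(N-1)(N+1)}$, matching \eqref{delta_I2_thm}.

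For the achievability direction I would write down the explicit behavior with the values in \eqref{P_I2:thm} extended symmetrically (choosing the $\gamma_m$ dictated by the linear system above, and choosing the free weight-$\le 2$ data so that all $P(\cdot|\mathbf{x}) \in [0,1]$), verify directly that $0\le P(0|\mathbf{x})\le 1$ everywhere, and check \eqref{interf2} holds — this simultaneously proves the behavior is a legitimate element of $\mathcal{J}_{N,2}$ and, since it lies on the maximizing face with several $P(0|e_k)$ pinned at $0$ and cannot be written as a convex combination of deterministic elements of $\mathcal{J}_{N,2}$, gives the promised non-deterministic vertex. The main obstacle I anticipate is the bookkeeping in the second step: one must argue carefully that at the optimum it is consistent to simultaneously set every $P(1|e_k)=1$ (the constraints for weight-$\le 2$ Fourier coefficients impose no obstruction, but this needs the explicit Hadamard computation) and that no other choice of the free parameters allows $\alpha$ below $\tfrac{2}{(N-2)(N-1)}$ — i.e. proving the matching lower bound on $\alpha$, which amounts to showing the triangular system has a unique solution once $\beta=1$ is fixed and the remaining slack is squeezed out. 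Everything else is a finite linear-algebra computation with Hamming-weight symmetry doing the heavy lifting.
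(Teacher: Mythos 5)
Your outline follows the same skeleton as the paper's proof---symmetrize over $S_N$ using convexity of $\mathcal{J}_{N,2}$, pass to the Hadamard/Fourier picture so that the degree-$\le 2$ condition kills all but the Hamming-weight-$0,1,2$ Fourier coefficients, and solve the resulting low-dimensional linear program---but two concrete steps in your plan are wrong as stated. First, the direction of the $\alpha$-optimization is backwards: the coefficient of $P(0|0\cdots0)$ in \eqref{fingerprint_ineq} is $+\tfrac{1}{N+1}$, so the functional is \emph{increasing} in $\alpha:=P(0|0\cdots0)$, and once $\beta=1$ the violation equals $\tfrac{\alpha}{N+1}$. The bound you actually need is an \emph{upper} bound $\alpha\le\tfrac{2}{(N-2)(N-1)}$ forced by positivity of the remaining components $P(\cdot|\mathbf{x})$ at higher Hamming weights together with the degree-$\le 2$ constraint; your stated goal of ``proving no choice of parameters allows $\alpha$ below $\tfrac{2}{(N-2)(N-1)}$'' is both the wrong inequality and false (classical behaviors achieve $\alpha=0$). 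Second, the claim that fixing $\beta=1$ leaves a triangular system with a unique solution miscounts the degrees of freedom: the symmetric degree-$\le 2$ slice is a $3$-parameter family $(z_0,z_1,z_2)$ of weight-only Fourier coefficients, and imposing $P(1|e_k)=1$ is a single linear equation, leaving a $2$-dimensional face over which one must still optimize. The paper resolves this by showing the optimum is the vertex of the polytope $\polyz$ cut out by the three binding constraints $P(1|\hamwgt{k}=1)=1$, $P(1|\hamwgt{k}=2)=1$, $P(1|\hamwgt{k}=N)=0$, and certifying global optimality by comparing against all three neighboring vertices along the edge directions.

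A further unproven step is the greedy assertion that the optimum has $\beta=1$: this is plausible because the $\beta$-terms carry total weight $\tfrac{N}{N+1}$ against $\tfrac{1}{N+1}$ for $\alpha$, but a priori a trade-off with $\beta<1$ and larger $\alpha$ must be excluded, which is exactly what the vertex-neighbor comparison in the paper accomplishes. If you correct the sign of the $\alpha$-optimization, retain all three variables $(z_0,z_1,z_2)$, and replace the greedy argument by an explicit check of the candidate vertex against its neighbors (or any other certificate of LP optimality), your approach becomes the paper's proof.
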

\begin{proof}
    See Appendix \ref{appendix_I2}.
\end{proof}

Recall that Theorem \ref{thm:max_qm_uniform} provides a lower bound on quantum violations of the fingerprinting inequality that scales as $\mathcal{O}(N^{-3})$. Together with the above theorem, this implies that the unrestrictedly optimal quantum violation of the inequality also scales as $\mathcal{O}(N^{-3})$.

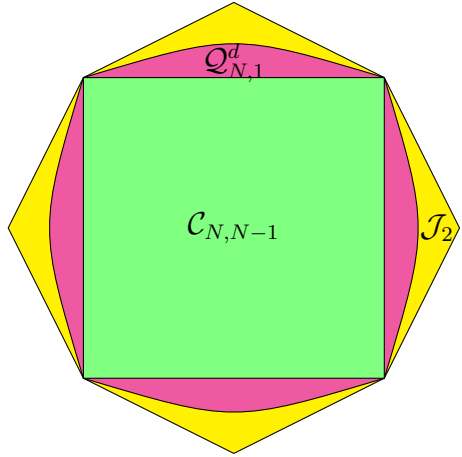
\begin{figure}
    \centering
    \begin{tikzpicture}[scale=2]
    \filldraw[fill=yellow]
    (-1.5,0) -- (-1,-1) -- (0,-1.5) -- (1,-1) -- (1.5,0) -- (1,1) -- (0,1.5) -- (-1,1) -- cycle;
    \filldraw[fill=magenta!80]
    (-1,1)  .. controls (-1.3,0) .. (-1,-1) .. controls (0,-1.3) .. (1,-1) .. controls (1.3,0) .. (1,1) .. controls (0,1.3) .. (-1,1) -- cycle;
    \filldraw[fill=green!50] (-1,-1) rectangle (1,1);
    \node[] at (0,0) {$\mathcal{C}_{N,N-1}$};
    \node[] at (0,1.1) {$\Qd$};
    \node[] at (1.35,0) {$\mathcal{J}_2$};
    \end{tikzpicture}
    \caption{Sketch of the sets $\mathcal{C}_{N,N-1}$, $\Qd$, and $\mathcal{J}_2$. See main text for details.}
    \label{fig:polysketch}
\end{figure}

Let us end our discussion by portraying the relationship between our sets of interest $\mathcal{C}_{N,N-1}$, $\Qd$, and $\mathcal{J}_2$; as in Fig. \ref{fig:polysketch}, inviting an analogy to the well-studied relationship between the classical, Bell-non-local, and the no-signaling polytopes that describe correlations producible in EPR-like experiments \cite{brunner_bell_2014}. Whereas we have shown that $\Qd$ and $\mathcal{J}_2$ roughly agree in their violation of the fingerprinting inequality, many questions concerning the relationship between the two polytopes remain unaddressed. \changes{For example, one could ask whether our result generalizes to other facet inequalities of $\CNK$, i.e., whether it is generally the case that quantum violations of classical inequalities saturate the second-order interference bound, and whether these violations approach to 0 in the limit of large $N$. While this certainly holds for those facets that are related to the fingerprinting inequality via some symmetry, we think we are not yet in a position to hypothesize about the general case. This and other such questions that concern the relationship between $\Qd$ and $\mathcal{J}_2$ are thus left for another occasion.}

\section{Conclusion and Outlook}

In this paper, we have furthered the study of the information-processing advantages provided by a spatially superposed quantum particle. We have focused on a particular class of experimental setups and shown various properties of the convex sets that describe correlations producible by classical and by quantum particles. In particular, we (1) pointed out mathematical properties of the classical polytopes, (2) analyzed how the dimensionality of the internal degrees of freedom of quantum particles and the (a)symmetricity of their initial states affect the violation of an inequality that cannot be breached with some specific number of classical particles, and finally, (3) deduced the scaling of the optimal quantum violation of the said inequality. \changes{In particular we showed that previous bounds on the violation $|\delta^{(1)}_N|$, which uses phase encoding only, can be surpassed by either considering an asymmetric superposition as input state or by utilizing higher-dimensional internal degrees of freedom in the encoding. However, it appears that the overall highest violation can already be obtained for $d=2$, symmetric input state ($p_i=1/N$), and symmetric encodings ($U_i = U_j$). For this case we provided an analytic expression of the violation $|\delta^{(2)}_N|$ in Theorem \ref{thm:max_qm_uniform}. }

Avenues still remaining to be investigated include to the very least the following ones. For once, the purely mathematical study of the geometry of the classical polytopes $\CNK$---and thereby of $K$-juntas - and of the quantum sets $\Qd$ is still far from complete. Furthermore, protocols with multiple quantum particles should be inspected, while keeping an eye on potential advantages obtainable from entanglement between particles. It would also be interesting to inspect some practical consequences of our results, along the lines of the discussion in  Ref.\cite{chen_information_2024}, which focuses on channel capacities. \changes{There, the authors investigate the communication rates of MACs that can be associated to our protocols, albeit only in the case of $d=1$. Since, as we have shown, the violation of the fingerprinting inequality can be increased by allowing higher-dimensional internal degrees of freedom, one might suspect that communication rates could also accordingly be increased for $d> 1$. If a higher bound were thereby to be found, it might also be possible to experimentally exploit this in a similar fashion as in Ref. \cite{chen_information_2024}, by, for instance, using a photon's polarization as the internal degree of freedom in the encoding scheme.} \changes{Moreover}, in regards to the second-order interference polytope, it is worth pondering on whether those of its behaviors that do not lie in the quantum polytope are somehow pathological, by e.g., conflicting with some basic physical or information-theoretic principle---analogous to the question concerning the potential pathology of non-quantum-mechanical no-signalling behaviors hypothetically arising in EPR-like experiments (see e.g., Ref. \cite{pawlowski_information_2009}).


\begin{acknowledgments}
This research was funded in
whole, or in part, by the Austrian Science Fund
(FWF) [10.55776/F71] and [10.55776/P36994]. For open access purposes,
the author(s) has applied a CC BY public copyright license to any author accepted manuscript version arising
from this submission.
\end{acknowledgments}
\appendix 
\section{Proof of Lemma \ref{max_p_wlog}}
\label{appa}

Since $P_W(\tilde{S}_N^{(d)})$ is linear in $\rho$, we can assume that the maximal amount of $P_W(\tilde{S}_N^{(d)})$ can be generated by some pure state $\ket{\psi}\bra{\psi}$, with general form
\begin{equation}\label{input}
    \ket{\psi}=\sum_{i,k}e^{i\phi_{ik}}\sqrt{p_{ik}}\ket{i}\ket{\chi'_k},
\end{equation}
where $\left\{\ket{\chi'_k}\right\}_k$ is an orthonormal basis on $\mathbb{C}^{d}$ and $\sum_{ik}p_{ik}=1$. Now we will show that this pure state can be assumed to be unentangled without loss of generality. Namely, take an arbitrary internal state $\ket{\chi}\in \mathbb{C}^{d}$, and define a list of unitaries $(G_1,...,G_N)$ that satisfy 
\begin{equation}
    G_i\ket{\chi}=\frac{1}{\sqrt{p_i}}\sum_k e^{i\phi_{ik}}\sqrt{p_{ik}}\ket{\chi'_k},
\end{equation}
where $p_i=\sum_k p_{ik}$. It is easy to see that the same states $\rho_{\mathbf{x}}$ defined by the possibly entangled input state $\ket{\psi}$ and by local unitaries $U_i^{(x_i)}$ is also generated by the separable state 
\begin{equation}\label{input2}
    \ket{\Tilde{\psi}}=\sum_i \sqrt{p_i} \ket{i}\ket{\chi},
\end{equation}
and by unitaries $\Tilde{U}_i^{(x_i)}\equiv U_i^{(x_i)}G_i$. The maximal value of $P_W(\tilde{S}_N^{(d)})$ can thus be assumed to be achievable by a separable pure state with positive coefficients as in Eq. \eqref{input2}, proving the first bullet-point of Lemma \ref{max_p_wlog}.

Another simplification comes after noticing that the unitaries can be recast as 
\begin{equation}
    U_i^{(x_i)}=U_i^{(0)}\left(U'_i \right)^{x_i},
\end{equation}
where $U'_i\equiv U_i^{(0)\dagger}U_i^{(1)}$. Indeed, note that for any strategy $S_N^{(d)}=\left(\rho, U_{\mathbf{x}}, \left\{\Pi_0,\Pi_1 \right\} \right)$, where $U_{\mathbf{x}}=\sum_i \ket{i}\bra{i}\otimes U_i^{(x_i)}$, there is another strategy $S_N^{(d)'}=\left(\rho, U'_{\mathbf{x}}, \left\{\Pi'_0,\Pi'_1 \right\} \right)$, such that $P_W(S_N^{(D)})=P_W(S_N^{(D)'})$, where $U_{\mathbf{x}}=\sum_i \ket{i}\bra{i} \otimes \left(U_i'\right)^{x_i}$ and $\Pi'_a=T^{\dagger}\Pi_a T$, with $T\equiv \sum_i \ket{i}\bra{i}\otimes U_i^{(0)}$. We can thus assume without loss of generality that each of the unitary devices implements the identity transformation when their input is set to 0, proving the second bullet point of Lemma \ref{max_p_wlog}.

\section{Proof of Theorem \ref{thm:max_qm_uniform}}
\label{appb}

The calculation of the optimal violation requires the optimization of the trace norm of $M\equiv q_1\sigma_1 - q_0\sigma_0$. Let us for compactness introduce the notation $\ket{\chi'}\equiv U\ket{\chi}$. A few lines of algebra show that $M$ takes the following form:

\begin{equation}
    M = N \sqrt{D_p} ~M_s \sqrt{D_p},
\end{equation}
where operator $D_p$ is the diagonalized $D_p=\sum_i p_i \ket{i}\bra{i} \otimes \identity_d$ that mediates the initial state's coefficients. $M_s$ denotes $M$ in the case of a symmetric input state ($p_i = 1/N$) and is given via Eq. \eqref{sigma0sigma1} by
\begin{equation}
    M_s=\frac{1}{N(N+1)}\left[\sum_{i,j} \ket{i}\bra{j} \otimes B +  \identity_N \otimes C   \right],
\end{equation}
with
\begin{align}
     B=&~\left[\left( N-3\right) \ket{\chi}\bra{\chi} + \ket{\chi}\bra{\chi'} + \ket{\chi'}\bra{\chi} \right], \\C=&~(\ket{\chi'} - \ket{\chi})(\bra{\chi'} - \bra{\chi}).
\end{align}
As the internal states $\ket{\chi}$ and $\ket{\chi'}$ span at most a two-dimensional subspace of $\mathbb{C}^d$, they can be coordinated as
\begin{align}
    \ket{\chi} = ( 1,~0)^T &~~~~~~ \ket{\chi'} = ( e^{i \phi} \cos{\theta},~ e^{i \psi} \sin{\theta})^T.
\end{align}
$M_s$ is now of the form
\begin{equation}
\vspace{-12pt}
    M_s = \frac{1}{N(N+1)}\begin{pmatrix}
     A & B & B &\hdots\\
     B & A & B & \\
     B & B & A & \\
     \vdots&  & & \ddots
    \end{pmatrix},
    \end{equation}
with
\begin{align}
    A =& \begin{pmatrix}
        \cos^2{\theta}+N-2& \sin{\theta}\cos{\theta}~ e^{i(\phi - \psi)} \\ \sin{\theta}\cos{\theta} ~e^{-i(\phi - \psi)} & \sin^2{\theta}
    \end{pmatrix}, \\
    B =& \begin{pmatrix}
        2\cos{\theta}\cos{\phi}+N-3& \sin{\theta}~ e^{- i\psi} \\ \sin{\theta} ~e^{i\psi } & 0
    \end{pmatrix}.
\end{align}

Recall that our goal is the computation of the trace norm of $M_s$, which is given by $M_s=\sum_j |\lambda_j|$, where $\lambda_j$ are the eigenvalues of $M_s$. Since $M_s$ is a ``block circulant'' matrix, its eigenvalues can be obtained by solving \cite{tee_eigenvectors_2007}
\begin{align*}
    &\text{det} (A + (N-1) B - N(N+1)\lambda \identity)\\&\cdot \text{det} (A - B - N(N+1)\lambda \identity)^{N-1} =0. 
\end{align*}
The eigenvalues are accordingly---up to the prefactor of $\frac{1}{N(N+1)}$---given by 0 and \\$2-2 \cos (\theta ) \cos (\phi )$ with $(N-1)$-fold multiplicity each, as well as 
\begin{align*}
\lambda_\pm&=
 \frac{N-1}{2} (2 \cos (\theta ) \cos (\phi )+N-2)\\ \pm&\frac{1}{2}\sqrt{(N-1)^2 (2 \cos (\theta ) \cos (\phi )+N-2)^2+4 N \sin ^2(\theta )},
\end{align*} 
of which one branch is positive and one branch negative. Of particular interest will be the only negative eigenvalue $\lambda_-$ with its eigenvector $\ket{\lambda_-}$, which is generally of the form \begin{equation}
    \ket{\lambda_-} = \frac{1}{\sqrt{N}}(1,~ \omega,~ \omega^2,..., \omega^{N-1}) \otimes \ket{\phi_-}
\end{equation}
with unit $\ket{\phi_-}$ and $\omega$ being a complex root of unity.\\
The trace norm of $M_s$ thus amounts to \begin{widetext}
\begin{equation}
 ||M_s||_1 =\frac{1}{N(N+1)}\bigg[(N-1) (2-2 \cos (\phi) \cos (\theta ))+\sqrt{4 N \sin ^2(\theta )+(N-1)^2 (2 \cos (\phi) \cos (\theta )+N-2)^2}\bigg].
\end{equation}
\end{widetext}
This expression is maximized for 
\begin{align}
    \phi = 0,~~~&\theta = \pi ~~\text{for}~N \leq 3,\nonumber\\ &\theta=\arccos\left(\frac{1-N}{N^2-3 N+1}\right)~\text{else},
\end{align}
which implies that the optimal violation, under the assumption of symmetric unitaries and input state, is given by
\begin{equation}
    \delta_N^{(d)}= \frac{1}{(N+1) \left(N^2-3 N+1\right)} ~~ \text{for}~ N > 3
\end{equation}\changes{as well as $\delta^{(d)}_2 = 1/3$ and $\delta^{(d)}_3 = 1/6$}.



We now also consider the case of a non-uniform input state, i.e. $p_k \neq 1/{N}$ where the win probability depends on the trace norm of $M$. We aim to show that $||M||_1 \leq  ||M_s||_1$. First, consider the case where all $p_k>0$: 
Note that by Sylvester's law of inertia \cite{sylvester_xix_1852}, $M$ also has a single negative eigenvalue $\lambda_{-}'$. We have
\begin{align}
    ||M_s||_1 -  ||M||_1 = \text{Tr}~ M_s + 2 |\lambda_-| -\text{Tr}~M - 2 |\lambda_-'|& \nonumber \\ =2 (|\lambda_-| - |\lambda_-'|) + N \text{Tr}~ A - N \text{Tr}~ A (\sum_{i=1}^N p_i) &\geq 0 ~~ \nonumber\\\Longleftrightarrow~~ |\lambda_-| \geq |\lambda_-'|~~ \Longleftrightarrow~~ M + |\lambda_-|\identity ~&\geq 0 \nonumber \\
    \Longleftrightarrow \frac{N M_s}{|\lambda_-|} +  D_p^{-1} &\geq 0. \label{asymcond2}
\end{align} 
In the last step, we multiplied with $\sqrt{D_p /|\lambda_-|}$ from both sides and used the law of inertia again. Let $\ket{v} = \sum_{i=1}^N \ket{i}\ket{v_i}$ be a general nonzero vector. By using Lagrange multipliers to optimize over $p_i$, it can be shown that 
\begin{equation}
    \bra{v} D_p^{-1} \ket{v} = \sum_{i=1}^N \frac{\braket{v_i }{v_i}}{p_i} \geq \left( \sum_{i=1}^N \sqrt{\braket{v_i}{v_i}}\right)^2.
\end{equation}
Since $M_s$ has only one negative eigenvalue $\lambda_-$, we have that condition \eqref{asymcond2} holds:
\begin{align}
    &\frac{N}{|\lambda_-|} \bra{v} M_s \ket{v} +  \bra{v} D_p^{-1} \ket{v} \nonumber \\&\geq \left( \sum_{i=1}^N\sqrt{\braket{v_i}{v_i}}\right)^2 - N \left \vert \braket{v}{\lambda_-}\right\vert^2 \nonumber\\&\geq 
     \left( \sum_{i=1}^N\sqrt{\braket{v_i}{v_i}}\right)^2 - \frac{N}{K} \left \vert \sum_{i=1}^N \omega^{i-1} \braket{v_i}{\phi_-}\right\vert^2 \nonumber \\
    &\geq \left( \sum_{i=1}^N\sqrt{\braket{v_i}{v_i}}\right)^2 - \frac{N}{K} \left( \sum_{i=1}^N\sqrt{\braket{v_i}{v_i} \braket{\phi_-}{\phi_-}}\right)^2 \geq 0.
\end{align}
To finalize this argument, consider the case where only $m < N$ of the $p_i > 0$. The matrix $M$ can then be modeled as subsystem with a matrix $M^{(m)}$ of $m$ devices padded with zeros:
\begin{equation}
    || M ||_1  = \frac{m +1}{N+1} ||M^{(m)}||_1 < || M^{(m)}_s ||_1 < || M_s ||_1 .
\end{equation}
We have thus shown that with symmetric unitaries, any deviation from a symmetric input state cannot increase the win probability.

\section{Proof of Theorem \ref{theorem:2nd_order_viol}}
\label{appendix_I2}
Let $x_k = P(1| [k])$ denote the second-order interference behavior maximizing the violation $\delta_{I2} = \vecsp{x}{w} - \frac{N}{N+1}$ of the fingerprinting inequality. Now introduce the following change of basis: $\Vec{y} = H \Vec{x}$ with $H$ being the Sylvester-ordered normalized Hadamard matrix. The condition of $\Vec{x}$ being of second order interference now implies that $y_i = 0$ for $\hamwgt{i}> 2$. Since $H$ is orthogonal, we have that $\vecsp{x}{w} = H\vec{x}\cdot H \vec{w} = \vec{y}\cdot{H\vec{w}}$.
Due to permutation symmetry of the fingerprinting inequality and the convexity of $\mathcal{J}_{N,2}$, we can construct a symmetric behavior, 
\begin{equation}
\vec{x}' = \frac{1}{|S_N|}\sum_{R\in S_N} R \vec{x}\in \mathcal{J}_{N,2},
\end{equation} which achieves the same violation as $\vecsp{x}{w} = \vec{x}'\cdot\vec{w}$ holds. Thus, the behavior $\vec{x}$ can without loss of generality be assumed to be permutation-symmetric, implying that its components only depend on the number of ones in the binary representation of $j$: 
\begin{equation}
     x_j = x_k ~~~~~\text{for}~ \hamwgt{j} = \hamwgt{k}.
\end{equation}
But this constraint on $\Vec{x}$ now implies that also $y_i$ depends only on the hamming weight of $i$, meaning there are now only three independent variables to perform optimization over, namely 
\begin{equation}
    z_{\hamwgt{i}} =2^{-N/2} \times y_i,
\end{equation} with $z_i = 0 $, for $i > 2 $.
By directly evaluating $H \Vec{w}$ for the fingerprinting inequality \eqref{fingerprint_ineq}, we obtain
\begin{align}
    (H\Vec{w})_k &= 2^{-N/2} \sum_{n=0}^{2^N-1}  (-1)^{\sum_j [k]_j [n]_j} w_n &\nonumber\\&= 2^{-N/2} \frac{1}{N+1}\left( \sum_{\substack{n=0\\ \hamwgt{n} = 1}}^{2^N-1} (-1)^{\sum_j [k]_j [n]_j} -1 \right) \nonumber\\
   &= \frac{N-1}{N+1} 2^{-N/2}
   ~~~~~~~~~~~\text{for}~\hamwgt{k}=0\nonumber\\
   &= \frac{N-3}{N+1} 2^{-N/2}
   ~~~~~~~~~~~\text{for}~\hamwgt{k}=1\nonumber\\
   &=\frac{N-5}{N+1}  2^{-N/2}~~~~~~~~~~~\text{for}~\hamwgt{k}=2.
\end{align}
\begin{widetext}
Thus taking the scalar product of the transformed quantities, the maximum violation yields
\begin{align}
\delta_{I2} =\frac{1}{N+1} \left( (N-1) z_0 +(N-3) N z_1 +(N-5)\frac{N(N-1)}{2}z_2 - N + 1 \right).
\label{goal_I2_linprog}
\end{align}
The positivity constraints on $\Vec{x}$ also transform as follows:

\begin{align}
 0 \leq x_k &\leq 1 ~~\Longrightarrow  0 \leq (H \Vec{y})_k \leq 1, \text{~with} \\
     x_k& = 2^{-N/2}\sum_{n=0}^{2^N -1} y_n(-1)^{\sum_j [k]_j [n]_j} 
    = z_0 + z_1 \sum_{\substack{n=0\\ \hamwgt{n} = 1}}^{2^N -1}(-1)^{\sum_j [k]_j [n]_j} + z_2 \sum_{\substack{n=0\\ \hamwgt{n} = 2}}^{2^N -1}n(-1)^{\sum_j [k]_j [n]_j} \nonumber\\
    &= z_0 + z_1 \sum_{r=0}^{N-1} (-1)^{\sum_j \delta_{r,j}[k]_j} + z_2  \sum_{\substack{r,s=0 \\ r\neq s}}^{N-1} (-1)^{\sum_j (\delta_{r,j} + \delta_{r,j})[k]_j} \nonumber\\
    &= z_0 + z_1 (N- 2 \hamwgt{k}) + z_2\left( \frac{N(N-1)}{2} + 2\left(N-\hamwgt{k}\right)\hamwgt{k}  \right) ~~~~\in [0, 1]. \label{z_constraint} 
\end{align}
\end{widetext}
In the last step, the sum over $r$ and $s$ amounts to counting the numbers of zeros and ones in the binary expressions $[k]_r$ and $[k]_r \oplus [k]_s$.
The constraint of expression \eqref{z_constraint} being in the interval $[0, 1]$ for all $k$ defines the three-dimensional convex polytope $\polyz$ in the variables $z_i$. Since its bounds depend only on the hamming weight of $k$, the maximum violation $\delta_{I2}$ can be efficiently solved for numerically as a linear program maximizing Eq. \eqref{goal_I2_linprog} over given constraints for each $N$ separately.

We now claim that this linear program of maximizing $\delta_{I2}$ is solved by 
\begin{equation}
 \optz = \frac{\left( \frac{3 \; N^{2} - 7 \; N}{2}, ~~~ N - 3,~~~ -1  \right)^T}{{2 \; N^{2} - 6 \; N + 4}}.
\end{equation}
To this end, we first show that $ \optz$ is indeed a vertex of $\polyz$, and then that it maximizes $\delta_{I2}$. We insert $ \optz$ into the expression \eqref{z_constraint}, obtaining
\begin{equation}
x_k
     =\frac{(N -\hamwgt{k}) (\hamwgt{k}+N-3)}{(N-2) (N-1)} , \label{zstar_inserted}
\end{equation}
which is $\geq 0$ and $\leq 1$  for $k = 0, ... , 2^N -1$,
meaning $\optz\in \polyz$. Furthermore, expression \eqref{zstar_inserted} equals 1 for $\hamwgt{k} = 1, 2$ and equals 0 for $\hamwgt{k} = N, 3-N$. For $N>3$, $\optz$ is the intersection of three facets of $\polyz$ supported by planes, which we denote by $A_1^1, A_2^1$ and $A_N^0$ respectively, making $\optz$
indeed a vertex of $\polyz$.

We now show that $\optz$ maximizes $\delta_{I2}$: If a solution to a convex optimization problem is locally optimal, it is also globally so \cite[138]{boyd_convex_2023}. Thus, due to linearity, a vertex $v$ of a convex polytope maximizes a certain linear functional if and only if all of its neighbor vertices attain a lower value.

Since $\optz$ is the intersection of three facets, it has three neighbor vertices. These lie at the intersection of two faces each. By taking the cross product of the plane's normal vectors, we obtain the following lines of intersection:
\begin{subequations}
\begin{align}
    e_1 &= A_1^1 \cap A_2^1 :\nonumber\\& \optz + t_1 \left(-N^2+5 N-8,2 (N-3),-2\right), \\
    e_2 &= A_1^1 \cap A_N^0 :\nonumber\\& \optz + t_2 \left(N \left(N^2-4 N+3\right),2-2 N,2-2 N\right), \\
    e_3 &= A_2^1 \cap A_N^0 :\nonumber\\& \optz + t_3 \left(-N \left(N^2-7 N+10\right),-8+4 N,-4+2 N\right).
\end{align}
\end{subequations}
We now calculate the change of violation for points on these lines:
\begin{subequations}
\label{deltadeltas}
\begin{align}
\delta_{I2}[e_1(t_1)-\optz] &=
t_1\frac{8}{N+1} & > 0,~ \text{for}~ t_1 > 0, \label{deltatelta1}\\
\delta_{I2}[e_2(t_2)-\optz] &=t_2\frac{4 (N-1) N}{N+1}
 & > 0,~ \text{for}~ t_2 > 0,\\
\delta_{I2}[e_3(t_3)-\optz] &=t_3\frac{4 N \left(N^2-5 N+6\right)}{N+1}  & > 0,~ \text{for}~ t_3 > 0. \label{deltatelta3}
\end{align}
\end{subequations}
Thus, a greater violation than $\optz$ is only possible for $t_i > 0$. By inserting $e_1, e_2, e_3$ into the halfspace inequalities of $A_N^0, A_2^1, A_1^1$, we obtain necessary conditions for $t_i$ in order to describe a point in $\polyz$, which also have to apply to the neighbor vertices of $\optz$: 
\begin{subequations}
\begin{align}
    0 - t_1\cdot (4 N^2-12 N+8) \geq 0 \Longrightarrow t_1 \leq 0,  \\
    1 + t_2\cdot (4 N^2-12 N+8) \leq 1 \Longrightarrow t_2 \leq 0, \\
    1 + t_3\cdot (4 N^2-12 N+8) \leq 1 \Longrightarrow t_3 \leq 0. 
\end{align} 
\end{subequations}
Now note that these conditions together with Eq. \eqref{deltadeltas} imply that $\optz$ has a higher violation $\delta_{I2}$ than all of its neighbor vertices, proving the overall optimality of $\optz$. Inserting $\optz$ into Eq. \eqref{goal_I2_linprog} gives the maximum violation $|\delta_{I2}|$ and Eq. \eqref{zstar_inserted} equals $P(1|[k])$.

\bibliography{content}

\end{document}